\documentclass[11pt]{article}
\usepackage{amsmath,amssymb,amsthm,mathtools}
\usepackage{geometry}
\usepackage{hyperref}
\usepackage{microtype}
\usepackage{enumitem}
\title{One-Shot and Concurrent Hitting Times for Grover-Coined Quantum Walks on  Cubelike Graphs}
\author{Jaideep Mulherkar}
\date{\today}

\newtheorem{theorem}{Theorem}
\newtheorem{lemma}[theorem]{Lemma}
\newtheorem{proposition}[theorem]{Proposition}

\theoremstyle{definition}

\DeclareMathOperator{\Cay}{Cay}
\newcommand{\Z}{\mathbb{Z}}
\newcommand{\C}{\mathbb{C}}
\newcommand{\Id}{\mathrm{I}}
\newcommand{\ket}[1]{\lvert #1\rangle}
\newcommand{\bra}[1]{\langle #1\rvert}
\newcommand{\braket}[2]{\langle #1\mid #2\rangle}

\begin{document}
\maketitle
\begin{abstract}
We study the one-shot and concurrent hitting for the discrete-time
Grover-coined quantum walk on cubelike graphs
$G=\operatorname{Cay}(\mathbb Z_2^d,\Omega)$ of degree
$\Delta=|\Omega|$. Starting from the vertex labeled $0$, we identify
$\sigma=\bigoplus_{\omega\in\Omega}\omega$ as a natural target vertex;
for the hypercube, $\sigma$ is precisely the antipodal vertex.

For families with $\Delta\to\infty$, let $T$ be an integer having the
same parity as $\Delta$ and satisfying
\[
\left|T-\frac{\pi\Delta}{2}\right|\leq 1.
\]
We show that the probability $p_T(\sigma)$ of finding the walker at
$\sigma$ when it is measured at time $T$ satisfies
\[
p_T(\sigma)=1-O(\Delta^{-1/5}).
\]
Thus the target is found with probability tending to one after
$\Theta(\Delta)$ steps.

For the concurrently measured walk, let
$H_T^{\mathrm{Conc}}(\sigma)$ denote the probability that the target is
detected at or before time $T$ when it is tested after every step. We
prove
\[
p_T(\sigma)\leq T H_T^{\mathrm{Conc}}(\sigma),
\]
which implies
$H_T^{\mathrm{Conc}}(\sigma)=\Omega(\Delta^{-1})$
over the same time scale.

The proof uses the Walsh-Fourier decomposition, an exact
two-dimensional reduction of each Fourier mode, and a universal
second-moment identity for the associated character sums. Our results
extend Kempe's hypercube hitting phenomenon \cite{Kempe2005} to
arbitrary cubelike generating sets and establish the conjectured
asymptotic hitting behavior for cubelike and augmented cubes in
\cite{MulherkarRajdeepakSunitha2022}.
\end{abstract}
\section{Introduction}

Quantum walks are quantum analogues of classical random walks and have
become an important framework in quantum information, quantum algorithms,
and the study of transport on graphs. They occur in continuous-time and
discrete-time forms. Continuous-time quantum walks were introduced in the
computational setting by Farhi and Gutmann \cite{FarhiGutmann1998}, while
discrete-time quantum walks on graphs were developed systematically by
Aharonov, Ambainis, Kempe, and Vazirani \cite{Aharonov2001}. For general
introductions and reviews, see
\cite{Kempe2003,VenegasAndraca2012,Portugal2013}. Quantum walks have led
to algorithms for search \cite{Shenvi2003}, element distinctness
\cite{Ambainis2007}, and have also been shown to provide a universal
model for quantum computation \cite{Childs2009}.

A basic question is the \emph{hitting problem}: starting from one vertex,
how rapidly can a quantum walk reach a specified target? The hypercube
provides a particularly striking example. Moore and Russell analyzed the
spectral structure and mixing behavior of quantum walks on the hypercube
\cite{MooreRussell2002}. Kempe subsequently showed that a discrete-time
Grover-coined quantum walk can reach the antipodal vertex on a time scale
linear in the dimension, in contrast with the exponentially large
classical hitting time \cite{Kempe2005}. She introduced two notions that
are relevant here:  one-shot hitting, in which the walk is measured only
at a prescribed time, and concurrent hitting, in which the target is
tested after every step.

The effect of repeated measurement on quantum hitting has subsequently
been studied from several perspectives. Krovi and Brun formulated a
measured hitting time using superoperators and analyzed the hypercube
\cite{KroviBrun2006}. More generally, repeated projective
measurements lead naturally to quantum first-detection problems
\cite{Friedman2017}. These results highlight an important distinction:
the probability of occupying a target at a prescribed time and the
probability of detecting it under repeated measurements are different
quantities and need not have the same behavior.

In this paper we extend the hypercube hitting framework to cubelike
graphs
\[
G=\operatorname{Cay}(\mathbb Z_2^d,\Omega).
\]
The hypercube corresponds to
$\Omega=\{e_1,\ldots,e_d\}$, whereas a general cubelike graph may contain
generators of different Hamming weights and need not possess the full
permutation symmetry of the hypercube. Writing
$\Delta=|\Omega|$, we identify
\[
\sigma=\bigoplus_{\omega\in\Omega}\omega
\]
as a canonical target vertex. Our main result shows that, as the degree
$\Delta$ tends to infinity, the Grover-coined moving-shift walk reaches
this vertex with probability tending to one after a number of steps
linear in $\Delta$. Thus the asymptotic hitting phenomenon known for the
hypercube extends to arbitrary cubelike generating sets under the walk
model considered here.

We also establish a direct connection between one-shot  and concurrent
hitting. If $p_T(\sigma)$ is the probability of finding the walker at
$\sigma$ when it is measured only at time $T$, and
$H_T^{\mathrm{Conc}}(\sigma)$ is the probability of detecting $\sigma$
by time $T$ when it is tested after every step, we show that the former
controls the latter through a general inequality. The argument expresses
the unmeasured amplitude at time $T$ in terms of propagated
first-detection amplitudes. Importantly, this relation does not depend on
the cubelike structure and therefore applies more generally to
discrete-time quantum walks. Combined with the  theorem, it
yields a concurrent hitting probability of order at least
$\Delta^{-1}$ over a linear time scale.

The proof of the  result isolates a general mechanism behind the
hypercube hitting phenomenon. Walsh characters diagonalize the
translations of $\mathbb Z_2^d$, and each Fourier mode of the coined walk
reduces exactly to a two-dimensional rotation. The corresponding
character sum determines the rotation angle. A universal second-moment
identity shows that all but a vanishing fraction of these character sums
are sufficiently small, while the character of the XOR target supplies
the sign needed for constructive interference at the relevant time.
This produces asymptotic phase alignment without requiring explicit
knowledge of the multiplicities of the individual Fourier modes. In
particular, the argument does not require bounded Hamming weight of the
generators, distance regularity, or permutation symmetry.

There is also a connection with a different quantum transport phenomenon
on cubelike graphs. In the continuous-time setting, Bernasconi, Godsil,
and Severini showed that the XOR of the generators plays a distinguished
role in perfect state transfer \cite{BernasconiGodsilSeverini2008}; see
also \cite{Godsil2012} for a broader treatment of state transfer on
graphs. The continuous-time and coined discrete-time models are
different, and the results should not be identified with one another.
Nevertheless, the appearance of the same algebraically distinguished
displacement is noteworthy. In the present discrete-time setting it
arises from the phase alignment of the Walsh--Fourier modes.

Our results apply in particular to augmented cubes
\cite{ChoudumSunitha2002}. In earlier numerical work, asymptotically
linear hitting behavior in the degree was conjectured for cubelike and
augmented-cube families \cite{MulherkarRajdeepakSunitha2022}. The present
work provides an analytic proof of this behavior for the Grover-coined
moving-shift walk and identifies the Fourier mechanism responsible for
it.

The paper is organized as follows. Section~2 defines the cubelike walk
and the one-shot and concurrent hitting notions. Section~3 develops the
Walsh--Fourier decomposition and the two-dimensional reduction of each
Fourier mode. Section~4 derives the target-amplitude formula and the
character-sum identities used in the analysis. Section~5 proves the
 hitting theorem. Section~6 establishes the relation between
 one shot and concurrent hitting and derives the concurrent bound. The
remaining sections discuss examples and consequences, including
hypercubes, augmented cubes, and complete cubelike graphs.

\section{Setup}
\subsection{Cubelike graph and coined walk}
Let $d\ge 1$. We write
\[
 \Z_2^d=(\Z/2\Z)^d
\]
for the elementary abelian $2$-group, with componentwise
addition modulo $2$. Let
\[
 \Omega\subseteq\Z_2^d\setminus\{0\},\qquad \Delta:=|\Omega|,
\]
and assume the elements of $\Omega$ are distinct.  The cubelike graph is
\[
 G=\Cay(\Z_2^d,\Omega),
\]
with an edge from $x$ to $x\oplus\omega$ for each $x\in\Z_2^d$ and
$\omega\in\Omega$.

The coin and position spaces are
\[
 \mathcal H_C=\operatorname{span}\{\ket{\omega}:\omega\in\Omega\}
 \cong\C^\Delta,
 \qquad
 \mathcal H_P=\operatorname{span}\{\ket{x}:x\in\Z_2^d\}.
\]
Define the uniform coin state and Grover coin by
\begin{equation}\label{eq:Dcoin}
 \ket D=\frac1{\sqrt\Delta}\sum_{\omega\in\Omega}\ket\omega,
 \qquad
 C=2\ket D\!\bra D-\Id_{\mathcal H_C}.
\end{equation}
The shift operator is defined as
\begin{equation}\label{eq:shift}
 S\ket{\omega,x}=\ket{\omega,x\oplus\omega},
\end{equation}
and one step of the walk is
\begin{equation}\label{eq:walk}
 U=S(C\otimes\Id_{\mathcal H_P}).
\end{equation}
We use the initial state
\begin{equation}\label{eq:initial}
 \ket{\psi_0}=\ket D\otimes\ket 0.
\end{equation}

\subsection{Target, one-shot and concurrent hitting times}
Define the target vertex $\sigma$ as the algebraic XOR of the elements of $\Omega$. In the hypercube this is the diametrically opposite vertex
\begin{equation}\label{eq:sigma}
 \sigma:=\bigoplus_{\omega\in\Omega}\omega.
\end{equation}
The position probability at $v\in\Z_2^d$ after $T$ steps is
\begin{equation}\label{eq:positionprob}
 p_T(v):=\sum_{\omega\in\Omega}
 \left|\bra{\omega,v}U^T\ket{\psi_0}\right|^2.
\end{equation}
Following Kempe  \cite{Kempe2005}, we define two types of hitting times
\begin{enumerate}
    \item \underline{One-shot hitting time:}
a pair $(T,p)$ is a one-shot
 hitting time from $0$ to $v$ when $p_T(v)\ge p$.\\

\item \underline{Concurrent hitting time:}
To define the concurrent hitting time we first define the measured walk. At each step of the quantum walk $U$ we measure to see if the walk is at a fixed position $\ket{v}$ or not. That is, we do a projective measurement with the projectors $P_v=\Id\otimes \ket{v}\bra{v}$ and $Q_v=\Id -P_v$. If the walker was found at $\ket{v}$ we stop otherwise we continue the walk. This is called the measured walk.

A quantum walk $U$ has a $(T,p)$ concurrent hitting time if the $\ket{v}$ measured walk has probability $\geq p$ of stopping at time $t\leq T$
\end{enumerate}

\section{Fourier decomposition}
For $a,x\in\Z_2^d$, write $a\cdot x$ for the binary inner product modulo
$2$, and define the character
\[
 \chi_a(x):=(-1)^{a\cdot x}.
\]
We have the following well-known properties
\begin{proposition}[Walsh characters and Fourier basis]
\label{prop:walsh-characters}
For $a,x\in\mathbb Z_2^d$, define the Walsh character
\[
\chi_a(x)=(-1)^{a\cdot x},
\]
where
\[
a\cdot x=\sum_{j=1}^d a_jx_j \pmod 2.
\]
The Walsh characters satisfy the following properties.

\begin{enumerate}[label=(\roman*)]

\item \textbf{Multiplicativity.}
For all $a,x,y\in\mathbb Z_2^d$,
\[
\chi_a(x\oplus y)
=
\chi_a(x)\chi_a(y).
\]

\item \textbf{Product of characters.}
For all $a,b,x\in\mathbb Z_2^d$,
\[
\chi_a(x)\chi_b(x)
=
\chi_{a\oplus b}(x).
\]

\item \textbf{Orthogonality in the group variable.}
For all $a,b\in\mathbb Z_2^d$,
\[
\frac{1}{2^d}
\sum_{x\in\mathbb Z_2^d}
\chi_a(x)\chi_b(x)
=
\delta_{a,b}.
\]

\item \textbf{Orthogonality in the character variable.}
For all $x,y\in\mathbb Z_2^d$,
\[
\frac{1}{2^d}
\sum_{a\in\mathbb Z_2^d}
\chi_a(x)\chi_a(y)
=
\delta_{x,y}.
\]

\item \textbf{Character sum.}
For every $a\in\mathbb Z_2^d$,
\[
\sum_{x\in\mathbb Z_2^d}\chi_a(x)
=
\begin{cases}
2^d, & a=0,\\
0,   & a\neq0.
\end{cases}
\]

\item \textbf{Fourier basis and inversion.}
Define the normalized Fourier basis states by
\[
|\widehat a\rangle
=
\frac{1}{2^{d/2}}
\sum_{x\in\mathbb Z_2^d}
\chi_a(x)|x\rangle,
\qquad a\in\mathbb Z_2^d.
\]
These states form an orthonormal basis,
\[
\langle\widehat a|\widehat b\rangle
=
\delta_{a,b},
\]
and the inverse Fourier transform is
\[
|x\rangle
=
\frac{1}{2^{d/2}}
\sum_{a\in\mathbb Z_2^d}
\chi_a(x)|\widehat a\rangle.
\]

\end{enumerate}
\end{proposition}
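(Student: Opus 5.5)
The plan is to derive everything from two elementary facts: bilinearity of the dot product modulo $2$, and the evaluation of a geometric-type sum over $\mathbb Z_2^d$ by factoring it coordinatewise. The remaining items then follow by short algebraic reductions.

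For (i) and (ii), note that $a\cdot(x\oplus y)\equiv a\cdot x+a\cdot y \pmod 2$ and $(a\oplus b)\cdot x\equiv a\cdot x+b\cdot x\pmod 2$. This holds because addition in $\mathbb Z_2^d$ is componentwise addition modulo $2$ and the product $a_jx_j$ is bilinear over $\mathbb Z/2\mathbb Z$. Since $(-1)^{m}$ depends only on $m \bmod 2$, exponentiating gives both identities directly. In addition, $\chi_a(x)\in\{\pm1\}$ is real, so $\chi_a(x)^2=1$, and the dot product is symmetric, so $\chi_a(x)=\chi_x(a)$. Both observations will be used repeatedly.

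The one step with real content is (v). For this I will factor the sum over the product structure:
\[
\sum_{x\in\mathbb Z_2^d}\chi_a(x)=\prod_{j=1}^d\Bigl(\sum_{x_j\in\{0,1\}}(-1)^{a_jx_j}\Bigr)=\prod_{j=1}^d\bigl(1+(-1)^{a_j}\bigr).
\]
Each factor equals $2$ if $a_j=0$ and $0$ if $a_j=1$. The product is therefore $2^d$ when $a=0$ and $0$ otherwise.

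Given (v), (iii) follows at once: by (ii), $\chi_a(x)\chi_b(x)=\chi_{a\oplus b}(x)$, and $a\oplus b=0$ if and only if $a=b$. For (iv), I use the symmetry $\chi_a(x)=\chi_x(a)$ to write $\chi_a(x)\chi_a(y)=\chi_x(a)\chi_y(a)=\chi_{x\oplus y}(a)$. Summing over $a$ with (v) gives $2^d\delta_{x,y}$.

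For (vi), orthonormality is a direct computation. Because the characters are real,
\[
\langle\widehat a|\widehat b\rangle=2^{-d}\sum_x\chi_a(x)\chi_b(x)=\delta_{a,b}
\]
by (iii). Orthonormality gives $2^d$ linearly independent vectors in the $2^d$-dimensional space $\mathcal H_P$, so they form a basis. For the inversion formula, I substitute the definition of $|\widehat a\rangle$:
\[
2^{-d/2}\sum_a\chi_a(x)|\widehat a\rangle=\sum_y\Bigl(2^{-d}\sum_a\chi_a(x)\chi_a(y)\Bigr)|y\rangle=|x\rangle
\]
by (iv). I do not expect any real obstacle here. The only point requiring a little care is organizing (v) as the single nontrivial computation and deriving the other parts from it via (ii) and the symmetry $\chi_a(x)=\chi_x(a)$.
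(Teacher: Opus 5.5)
Your proposal is correct and follows essentially the same route as the paper. Parts (i)--(ii) come from bilinearity mod $2$; (iii) and (iv) reduce via (ii), together with multiplicativity or the symmetry $\chi_a(x)=\chi_x(a)$, to the character sum (v); and (vi) follows by direct substitution using (iii) and (iv). The only difference is that you actually prove (v) by the coordinatewise factorization $\prod_j(1+(-1)^{a_j})$, where the paper just asserts that a nontrivial character takes the values $\pm1$ equally often, and you spell out the dimension count behind the basis claim---tightenings of the same argument rather than a different approach.
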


\begin{proof}
Properties (i) and (ii) follow directly from arithmetic over
$\mathbb Z_2$. In particular,
\[
\chi_a(x\oplus y)
=
(-1)^{a\cdot(x\oplus y)}
=
(-1)^{a\cdot x}(-1)^{a\cdot y}
=
\chi_a(x)\chi_a(y),
\]
and similarly
\[
\chi_a(x)\chi_b(x)
=
(-1)^{(a\oplus b)\cdot x}
=
\chi_{a\oplus b}(x).
\]

For (iii), property (ii) gives
\[
\sum_{x\in\mathbb Z_2^d}
\chi_a(x)\chi_b(x)
=
\sum_{x\in\mathbb Z_2^d}
\chi_{a\oplus b}(x).
\]
If $a=b$, then $a\oplus b=0$, so every term equals $1$ and the
sum is $2^d$. If $a\neq b$, then $a\oplus b\neq0$. A nontrivial
Walsh character takes the values $+1$ and $-1$ equally often, and
hence the sum vanishes. This also proves (v).

Property (iv) follows by the same argument with the roles of the
group and character variables interchanged:
\[
\sum_{a\in\mathbb Z_2^d}
\chi_a(x)\chi_a(y)
=
\sum_{a\in\mathbb Z_2^d}
\chi_a(x\oplus y)
=
2^d\delta_{x,y}.
\]

It remains to verify (vi). By (iii),
\begin{align*}
\langle\widehat a|\widehat b\rangle
&=
\frac{1}{2^d}
\sum_{x\in\mathbb Z_2^d}
\chi_a(x)\chi_b(x)\\
&=
\delta_{a,b},
\end{align*}
so the states $|\widehat a\rangle$ form an orthonormal basis.

To obtain the inverse transform, substitute the definition of
$|\widehat a\rangle$:
\begin{align*}
\frac{1}{2^{d/2}}
\sum_{a\in\mathbb Z_2^d}
\chi_a(x)|\widehat a\rangle
&=
\frac{1}{2^d}
\sum_{a\in\mathbb Z_2^d}
\sum_{y\in\mathbb Z_2^d}
\chi_a(x)\chi_a(y)|y\rangle\\
&=
\sum_{y\in\mathbb Z_2^d}
\left(
\frac{1}{2^d}
\sum_{a\in\mathbb Z_2^d}
\chi_a(x)\chi_a(y)
\right)|y\rangle\\
&=
\sum_{y\in\mathbb Z_2^d}
\delta_{x,y}|y\rangle\\
&=
|x\rangle,
\end{align*}
where the third equality uses property (iv).
\end{proof}

\begin{lemma}[Fourier block decomposition]\label{lem:block}
In the Fourier basis of $\mathcal H_P$, the walk operator is unitarily
equivalent to
\[
 U\cong\bigoplus_{a\in\Z_2^d}U_a,
 \qquad U_a=D_aC,
\]
where
\begin{equation}\label{eq:Da}
 D_a\ket\omega=\chi_a(\omega)\ket\omega
 \qquad(\omega\in\Omega).
\end{equation}
\end{lemma}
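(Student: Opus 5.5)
The plan is to compute the action of $U$ on product states $\ket\omega\otimes\ket{\widehat a}$ and show that each subspace $\mathcal H_C\otimes\ket{\widehat a}$ is invariant. On that subspace $U$ acts as $D_aC$. The unitary equivalence is then implemented by the change of basis $\Id_{\mathcal H_C}\otimes F$, where $F\ket x=\ket{\widehat x}$ is the Fourier transform of Proposition~\ref{prop:walsh-characters}(vi). Since the $\ket{\widehat a}$ form an orthonormal basis, we get $\mathcal H_C\otimes\mathcal H_P=\bigoplus_a \mathcal H_C\otimes\ket{\widehat a}$, and it suffices to identify the restriction of $U$ to each summand.

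\emph{Step 1: the shift is diagonal in the Fourier basis.} For fixed $\omega$, the operator $S_\omega:\ket x\mapsto\ket{x\oplus\omega}$ is a translation. By multiplicativity (Proposition~\ref{prop:walsh-characters}(i)) and the substitution $y=x\oplus\omega$,
\[
S_\omega\ket{\widehat a}=\frac1{2^{d/2}}\sum_x\chi_a(x)\ket{x\oplus\omega}
=\frac1{2^{d/2}}\sum_y\chi_a(y\oplus\omega)\ket{y}=\chi_a(\omega)\ket{\widehat a}.
\]
Here we use $\chi_a(\omega)^2=1$, or equivalently $\chi_a(y\oplus\omega)=\chi_a(y)\chi_a(\omega)$. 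Since $S=\sum_\omega\ket\omega\!\bra\omega\otimes S_\omega$, this gives $S(\ket\omega\otimes\ket{\widehat a})=\chi_a(\omega)\ket\omega\otimes\ket{\widehat a}$. In other words, $S$ restricted to $\mathcal H_C\otimes\ket{\widehat a}$ is $D_a\otimes\Id$.

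\emph{Step 2: the coin preserves each summand.} The operator $C\otimes\Id$ acts only on the coin factor, so $(C\otimes\Id)(\ket\phi\otimes\ket{\widehat a})=C\ket\phi\otimes\ket{\widehat a}$. Combining with Step 1, $U(\ket\phi\otimes\ket{\widehat a})=D_aC\ket\phi\otimes\ket{\widehat a}$ for every $\ket\phi\in\mathcal H_C$. Hence
\[
(\Id\otimes F)^{\dagger}\,U\,(\Id\otimes F)=\sum_a U_a\otimes\ket a\!\bra a,\qquad U_a=D_aC,
\]
which is the claimed block decomposition. Each $U_a$ is unitary, being a product of the unitary $C$ and the diagonal $\pm1$ matrix $D_a$.

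None of these steps is a real obstacle. The only point needing care is the reindexing in Step 1, together with the order of operations in $U=S(C\otimes\Id)$: the coin is applied first, so the block is $D_aC$ rather than $CD_a$.
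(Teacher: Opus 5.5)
Your proposal is correct and takes the same route as the paper. The paper also uses the reindexing $y=x\oplus\omega$ with multiplicativity to get $S(\ket\omega\otimes\ket{\widehat a})=\chi_a(\omega)\ket\omega\otimes\ket{\widehat a}$, and then notes that $C\otimes\Id$ acts only on the coin factor, which gives the block $D_aC$. Your explicit intertwiner $\Id\otimes F$ just makes precise the unitary equivalence that the paper leaves implicit.
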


\begin{proof}
For each $a\in\Z_2^d$ and $\omega\in\Omega$,
\begin{align*}
 S(\ket\omega\otimes\ket{\widehat a})
 &=S(\ket\omega\otimes 2^{-d/2}\sum_{x \in \Z_2^d}\chi_a(x)\ket{x})\\
 &=2^{-d/2}\sum_{x \in \Z_2^d}\chi_a(x)S(\ket\omega\otimes \ket{x})\\
 &=2^{-d/2}\sum_x\chi_a(x)\ket{\omega,x\oplus\omega}\\
 &=2^{-d/2}\sum_y\chi_a(y\oplus\omega)\ket{\omega,y}\\
 &=2^{-d/2}\sum_y\chi_a(y)\chi_a(\omega)\ket{y} \qquad \text{(From property i)  }\\
 &=\chi_a(\omega)\ket\omega \otimes 2^{-d/2}\sum_y\chi_a(y)\ket{y}\\
 &=\chi_a(\omega)\ket\omega\otimes\ket{\widehat a} \qquad \text{(From property vi)  }.
\end{align*}
Thus the shift restricted to Fourier mode $a$ is $D_a$.  The coin does not
act on position, so the restriction of $U=S(C\otimes\Id)$ to that mode is
$D_aC$.
\end{proof}

For each $a$, define
\begin{equation}\label{eq:Mma}
 M(a):=\sum_{\omega\in\Omega}\chi_a(\omega),
 \qquad
 m(a):=\frac{M(a)}\Delta.
\end{equation}
Let $\ket{F_a}:=D_a\ket D$. Then,
\begin{equation}
\ket{F_a} = D_a\ket D = \frac{1}{\sqrt{\Delta}}\sum_{\omega \in \Omega}D_a  \ket\omega=\frac{1}{\sqrt{\Delta}}\sum_{\omega \in \Omega}\chi_a(\omega)  \ket\omega  \qquad \text{(From lemma \ref{lem:block})}
\end{equation}
And we get
\begin{equation}
    \braket{D}{F_a} = \frac{1}{\Delta}\sum_{\omega,\omega' \in \Omega }\chi_a(\omega) \braket{\omega}{\omega'}=\frac{1}{\Delta}\sum_{\omega\in \Omega }\chi_a(\omega) =m(a)
\end{equation}
The vector $\ket{F_a}$ can be decomposed as
\begin{equation}
|F_a\rangle
=
\underbrace{m(a)|D\rangle}_{\text{parallel to }|D\rangle}
+
\underbrace{\left(|F_a\rangle-m(a)|D\rangle\right)}_{\text{orthogonal to }|D\rangle}.
\end{equation}
\[
|F_a\rangle
=
m(a)|D\rangle
+
s(a)|E_a\rangle,
\]
where $s(a) = \sqrt{1-m(a)^2}$ and
\begin{equation}
|E_a\rangle
:=
\frac{|F_a\rangle-m(a)|D\rangle}
{s(a)}.
\end{equation}
$\ket{E_a}$ is the normalized component of $\ket{F_a}$ that is orthogonal to $\ket{D}$

\begin{lemma}[Two-dimensional reduction]\label{lem:rotation}
 The subspace
\[
 \mathcal K_a:=\operatorname{span}\{\ket D,\ket{F_a}\}
\]
is invariant under $U_a$.  If $|m(a)|<1$, in the ordered orthonormal basis $\{\ket D,\ket{E_a}\}$,
\begin{equation}\label{eq:rotationmatrix}
 U_a\big|_{\mathcal K_a}=
 \begin{pmatrix}
 m(a)&-s(a)\\
 s(a)&m(a)
 \end{pmatrix}.
\end{equation}
Thus, restricted to the subspace $\mathcal K_a$ the action of $U_a$ is a rotation  with
\begin{equation}\label{eq:theta}
 \theta_a:=\arccos m(a)\in[0,\pi],
\end{equation}
Consequently we have for every integer $T\ge0$,
\begin{equation}\label{eq:cosformula}
 \bra D U_a^T\ket D=\cos(T\theta_a).
\end{equation}

\end{lemma}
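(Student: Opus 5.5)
The plan is a direct computation on the two vectors $\ket D$ and $\ket{E_a}$, using only the definitions $C=2\ket D\!\bra D-\Id$ and $U_a=D_aC$ from Lemma~\ref{lem:block}. Two facts drive everything: $D_a$ is a diagonal matrix with entries $\pm1$, so $D_a^2=\Id$, and $D_a$ is real symmetric.

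First compute the images of the spanning vectors of $\mathcal K_a$. Since $C\ket D=\ket D$, we have
\[
U_a\ket D=D_a\ket D=\ket{F_a}\in\mathcal K_a .
\]
Next, $C\ket{F_a}=2m(a)\ket D-\ket{F_a}$ because $\braket{D}{F_a}=m(a)$. Using $D_a\ket{F_a}=D_a^2\ket D=\ket D$, this gives
\[
U_a\ket{F_a}=2m(a)\ket{F_a}-\ket D\in\mathcal K_a .
\]
Hence $\mathcal K_a$ is invariant. If $|m(a)|=1$, then $\ket{F_a}=\pm\ket D$ and $\mathcal K_a$ is one-dimensional; this case needs no separate treatment in the invariance claim.

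Now assume $|m(a)|<1$, so that $s(a)>0$ and $\{\ket D,\ket{E_a}\}$ is an orthonormal basis of $\mathcal K_a$. The first column of the matrix is immediate:
\[
U_a\ket D=\ket{F_a}=m(a)\ket D+s(a)\ket{E_a}.
\]
For the second column, write $\ket{E_a}=(\ket{F_a}-m(a)\ket D)/s(a)$ and substitute the two images computed above:
\[
U_a\ket{E_a}=\frac{2m\ket{F_a}-\ket D-m\ket{F_a}}{s}=\frac{m\ket{F_a}-\ket D}{s}.
\]
Expanding $\ket{F_a}=m\ket D+s\ket{E_a}$ turns this into
\[
\frac{(m^2-1)\ket D+ms\ket{E_a}}{s}=-s\ket D+m\ket{E_a},
\]
which is the second column of \eqref{eq:rotationmatrix}. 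Setting $m=\cos\theta_a$ and $s=\sin\theta_a\ge0$ is consistent with $\theta_a\in[0,\pi]$, so this matrix is rotation by $\theta_a$.

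Powers of a rotation by $\theta_a$ are rotations by $T\theta_a$, so the $(1,1)$ entry of the $T$-th power is $\cos(T\theta_a)$. This gives \eqref{eq:cosformula}.

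The boundary case $|m(a)|=1$ must still be checked for the cosine formula. Here $\ket{F_a}=m\ket D$, so $U_a\ket D=m\ket D$ and therefore $\bra D U_a^T\ket D=m^T=\cos(T\theta_a)$, since $\theta_a\in\{0,\pi\}$.

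The only real obstacle is bookkeeping in the second column, namely using $D_a^2=\Id$ correctly to get $D_a\ket{F_a}=\ket D$. Everything else is mechanical.
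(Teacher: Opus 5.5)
Your proof is correct and is essentially the paper's direct computation: both get $U_a\ket D=\ket{F_a}$ and then the image of $\ket{E_a}$ from $C\ket D=\ket D$ and $D_a^2=\Id$, except that you first compute $U_a\ket{F_a}=2m(a)\ket{F_a}-\ket D$ and use linearity, while the paper applies $C\ket{E_a}=-\ket{E_a}$ directly. Your version also proves the invariance of $\mathcal K_a$ explicitly and checks the cosine formula in the boundary case $|m(a)|=1$, which the paper only mentions in a remark after the proof.
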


\begin{proof}
Since $C\ket D=\ket D$,
\begin{equation}
\label{eq:Rot1}
 U_a\ket D=D_aC\ket D=D_a\ket D=\ket{F_a}
 =m(a)\ket D+s(a)\ket{E_a}
 \end{equation}
Because $\ket{E_a}\perp\ket D$, the Grover coin which is a reflection acts on it as $C\ket{E_a} = -1 \ket{E_a}$.
Using $D_a^2=\Id$ and
$\ket{E_a}=(D_a\ket D-m(a)\ket D)/s(a)$, we obtain
\begin{align*}
 U_a\ket{E_a}
 &=D_aC\ket{E_a}\\
 &= -D_a\ket{E_a}\\
 &= -D_a\Big(\frac{|F_a\rangle-m(a)|D\rangle}
{s(a)}\Big)\\
 &= -D_a\Big(\frac{D_a|D\rangle-m(a)|D\rangle}
{s(a)}\Big)\\
 &=-\Big(\frac{\ket D-m(a)\ket{F_a}}{s(a)}\Big)\\
&=-\Big(\frac{\ket D-m(a)(m(a)\ket{D}+s(a)\ket{E_a})}{s(a)}\Big)\\
&=-\Big(\frac{(1-m(a)^2)\ket D-m(a)s(a)\ket{E_a})}{s(a)}\Big)
\end{align*}
So
\begin{equation}
\label{eq:Rot2}
U_a\ket{E_a} =-s(a)\ket D+m(a)\ket{E_a}.
\end{equation}

From equations \eqref{eq:Rot1} and  \eqref{eq:Rot2} we get \eqref{eq:rotationmatrix}.  It is the rotation through angle
$\theta_a$ with cosine $m(a)$, and \eqref{eq:cosformula} follows.
\end{proof}
Note that the same formula remains valid when $m(a)=\pm1$ in which case the subspace $\mathcal K_a:=\operatorname{span}\{\ket D,\ket{F_a}\}$ is one dimensional.

\section{Target amplitude and universal identities}
Define the amplitude at position $v$ in the uniform coin direction by
\begin{equation}\label{eq:ATdef}
 A_T(v):=(\bra D\otimes\bra v)U^T\ket{\psi_0}.
\end{equation}

\begin{proposition}[Fourier formula for the target amplitude]\label{prop:amplitude}
For every $v\in\Z_2^d$ and every integer $T\ge0$,
\begin{equation}\label{eq:AT}
 A_T(v)=2^{-d}\sum_{a\in\Z_2^d}
 \chi_a(v)\cos(T\theta_a).
\end{equation}
Furthermore,
\begin{equation}\label{eq:lowerbound}
 p_T(v)\ge |A_T(v)|^2.
\end{equation}
\end{proposition}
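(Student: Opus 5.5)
The plan is to expand both the initial state and the projection onto $\bra D\otimes\bra v$ in the Fourier basis of Proposition~\ref{prop:walsh-characters}(vi), then use the block decomposition of Lemma~\ref{lem:block} and the rotation formula \eqref{eq:cosformula} of Lemma~\ref{lem:rotation}.

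\textbf{Step 1: expand the initial state.} By the inverse transform with $x=0$ and $\chi_a(0)=1$, we have $\ket 0=2^{-d/2}\sum_a\ket{\widehat a}$. Hence
\[
\ket{\psi_0}=2^{-d/2}\sum_a\ket D\otimes\ket{\widehat a}.
\]

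\textbf{Step 2: apply the walk mode by mode.} By Lemma~\ref{lem:block}, $U^T(\ket\phi\otimes\ket{\widehat a})=(U_a^T\ket\phi)\otimes\ket{\widehat a}$ for every coin vector $\ket\phi$. Therefore
\[
U^T\ket{\psi_0}=2^{-d/2}\sum_a (U_a^T\ket D)\otimes\ket{\widehat a}.
\]

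\textbf{Step 3: take the inner product.} Pairing with $\bra D\otimes\bra v$ and using $\braket{v}{\widehat a}=2^{-d/2}\chi_a(v)$ (the characters are real), we get
\[
A_T(v)=2^{-d}\sum_a\chi_a(v)\,\bra D U_a^T\ket D .
\]
Substituting \eqref{eq:cosformula} gives \eqref{eq:AT}. The degenerate modes with $m(a)=\pm1$ are covered by the remark after Lemma~\ref{lem:rotation}. In those modes $U_a\ket D=\pm\ket D$, so $\bra D U_a^T\ket D=(\pm1)^T=\cos(T\theta_a)$ with $\theta_a\in\{0,\pi\}$.

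\textbf{Step 4: the lower bound.} Let $\ket{\phi_v}:=\sum_\omega\bra{\omega,v}U^T\ket{\psi_0}\,\ket\omega\in\mathcal H_C$ be the (unnormalized) coin vector at position $v$. Then $p_T(v)=\|\phi_v\|^2$, while $A_T(v)=\braket{D}{\phi_v}$. Cauchy--Schwarz with $\|D\|=1$ gives $|A_T(v)|^2\le\|\phi_v\|^2=p_T(v)$, which is \eqref{eq:lowerbound}.

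No step is a serious obstacle. The only points needing care are the bookkeeping of the normalization factors $2^{-d/2}$, which combine to give $2^{-d}$, and checking that Lemma~\ref{lem:block} applies to arbitrary coin vectors. The latter holds because the computation in its proof was done for each $\ket\omega$ and extends by linearity.
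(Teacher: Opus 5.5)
Your proposal is correct and follows the paper's route. You expand $\ket 0$ in the Fourier basis, apply Lemma~\ref{lem:block} mode by mode, pair with $\bra D\otimes\bra v$ using $\braket{v}{\widehat a}=2^{-d/2}\chi_a(v)$ and \eqref{eq:cosformula}, and finish with Cauchy--Schwarz on the coin vector at position $v$. Your explicit handling of the degenerate modes $m(a)=\pm1$, and of the linear extension of Lemma~\ref{lem:block} to arbitrary coin vectors, makes precise two points the paper treats only in passing.
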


\begin{proof}
Since
\begin{equation*}
    \ket{\hat{a}} = 2^{-d/2} \sum_{x\in \Z_2^d} \chi_a(x) \ket{x}
\end{equation*}

\begin{equation}
\label{eq:ip}
    \braket{v}{\hat{a}} = 2^{-d/2} \chi_a(v) 
\end{equation}

Moreover, the Fourier inversion formula gives
\begin{equation*}
\ket{x} = 2^{-d/2} \sum_{a\in Z_2^d}\chi_a(x)\ket{\hat{a}}
\end{equation*}
We get
\begin{eqnarray}
\label{eq:inversion}
\ket{0} &= 2^{-d/2} \sum_{a\in Z_2^d}\chi_a(0)\ket{\hat{a}}\\ \nonumber
\ket{0} &= 2^{-d/2} \sum_{a\in Z_2^d}\ket{\hat{a}}
\end{eqnarray}
We have 
\begin{align*}
    A_T(v) &= \bra{D,v}U^T\ket{D,0}        
\end{align*}
Substituting  \eqref{eq:inversion} we get
\begin{align*}
 A_T(v) &=   2^{-d/2} \sum_{a\in Z_2^d}\bra{D,v}U^T\ket{D,\hat{a}}\\ 
 &=   2^{-d/2} \sum_{a\in Z_2^d}\bra{D,v}\bigoplus_a U_a^T\ket{D,\hat{a}}\\ 
 &=   2^{-d/2} \sum_{a\in Z_2^d}\bra{D}U_a^T\ket{D}\braket{v}{\hat{a}}
\end{align*}
The last step is because $U_a$ acts non trivially only on $\ket{D,\hat{a}}$. Now substituting equation \eqref{eq:ip} we get
\begin{equation}
  A_T(v) =  2^{-d} \sum_{a\in Z_2^d} \chi_a(v)\cos(T\theta_a) 
\end{equation}
$p_T(v)$ is the probability of observing position $v$ without measuring the coin.
Let
\begin{equation}
    \ket{\phi_T(v)} := (\Id_{\mathcal H_C} \otimes \bra{v})U^T\ket{\psi_0}
\end{equation}
Then,
\begin{equation}
     p_T(v) = \|  \ket{\phi_T(v)}\|^2 
\end{equation}
On the other hand,
\begin{equation}
    A_T(v) = \braket{D}{\phi_T(v)}
\end{equation}
Since $\|D\|^2=1$ Cauchy-Schwarz gives
\begin{equation}
    |A_T(v)|^2 = |\braket{D}{\phi_T(v)}|^2 \leq \|D\|^2 \|\phi_T(v)\|^2 =  \|\phi_T(v)\|^2
\end{equation}
Hence
\begin{equation}
    p_T(v) \geq |A_T(v)|^2 
\end{equation}
\end{proof}

\begin{proposition}[Product-phase identity]\label{prop:phase}
For the canonical target $\sigma$ in \eqref{eq:sigma},
\begin{equation}\label{eq:phaseidentity}
 \chi_a(\sigma)=(-1)^{(\Delta-M(a))/2}
 \qquad(a\in\Z_2^d).
\end{equation}
\end{proposition}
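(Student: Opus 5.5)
The plan is to compute both sides of \eqref{eq:phaseidentity} by counting how many generators lie on each side of the hyperplane $a\cdot x=0$. Fix $a\in\Z_2^d$ and split $\Omega$ into
\[
\Omega_a^+=\{\omega\in\Omega:\ a\cdot\omega=0\},\qquad \Omega_a^-=\{\omega\in\Omega:\ a\cdot\omega=1\},
\]
with sizes $n_+$ and $n_-$. Then $\Delta=n_++n_-$, and by \eqref{eq:Mma} we have $M(a)=n_+-n_-$. Hence $\Delta-M(a)=2n_-$, so $(\Delta-M(a))/2=n_-$ is a nonnegative integer and the right-hand side of \eqref{eq:phaseidentity} is well defined.

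Next I would evaluate the left-hand side using multiplicativity, Proposition~\ref{prop:walsh-characters}(i). Applying it repeatedly, by induction on $|\Omega|$ starting from $\chi_a(0)=1$, gives
\[
\chi_a(\sigma)=\chi_a\Big(\bigoplus_{\omega\in\Omega}\omega\Big)=\prod_{\omega\in\Omega}\chi_a(\omega).
\]
Each factor is $+1$ for $\omega\in\Omega_a^+$ and $-1$ for $\omega\in\Omega_a^-$, so the product equals $(-1)^{n_-}$. This is $(-1)^{(\Delta-M(a))/2}$, which proves \eqref{eq:phaseidentity}.

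There is no real obstacle here. The only points needing care are that $\Delta-M(a)$ is even, which is what makes the exponent an integer, and that the iterated multiplicativity step applies to an XOR over a finite set with no ordering issues, since $\Z_2^d$ is abelian.
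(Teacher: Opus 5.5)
Your proof is correct and follows the same route as the paper's. The paper likewise lets $r(a)$ count the generators with $\chi_a(\omega)=-1$ (your $n_-$), notes that $M(a)=\Delta-2r(a)$, and uses multiplicativity to obtain $\chi_a(\sigma)=\prod_{\omega\in\Omega}\chi_a(\omega)=(-1)^{r(a)}$.
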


\begin{proof}
Let $r(a)$ be the number
of negative signs among the $\Delta$ values. 
\begin{equation}
    r(a) = |\{\omega \in \Omega: \chi_a(\omega) = -1\}|
\end{equation}
Then
$M(a)=\Delta-2r(a)$, while multiplicativity of characters gives
\[
 \chi_a(\sigma)=\prod_{\omega\in\Omega}\chi_a(\omega)=(-1)^{r(a)}.
\]
Since $r(a)=(\Delta-M(a))/2$, the claim follows.
\end{proof}

\begin{lemma}[Universal second moment]\label{lem:secondmoment}
If the elements of $\Omega$ are distinct, then
\begin{equation}\label{eq:secondmoment}
 2^{-d}\sum_{a\in\Z_2^d}M(a)^2=\Delta.
\end{equation}
Consequently, for every $L>0$,
\begin{equation}\label{eq:tail}
 2^{-d}\left|\{a:|M(a)|>L\}\right|\le\frac\Delta{L^2}.
\end{equation}
\end{lemma}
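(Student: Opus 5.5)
The plan is to expand $M(a)^2$ as a double sum over pairs of generators and then average over $a$ using character orthogonality. Write
\[
M(a)^2=\sum_{\omega,\omega'\in\Omega}\chi_a(\omega)\chi_a(\omega')
=\sum_{\omega,\omega'\in\Omega}\chi_a(\omega\oplus\omega'),
\]
using multiplicativity (property (i) of Proposition~\ref{prop:walsh-characters}), or equivalently property (iv) with $x=\omega$, $y=\omega'$. Summing over $a\in\Z_2^d$ and exchanging the finite sums gives
\[
2^{-d}\sum_{a}M(a)^2=\sum_{\omega,\omega'\in\Omega}2^{-d}\sum_a\chi_a(\omega)\chi_a(\omega')=\sum_{\omega,\omega'\in\Omega}\delta_{\omega,\omega'}.
\]
The right-hand side counts the pairs with $\omega=\omega'$. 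Since the elements of $\Omega$ are distinct, this count is exactly $|\Omega|=\Delta$, which proves \eqref{eq:secondmoment}. This is the only point where distinctness is used: with repetitions, the diagonal count would instead be $\sum_\omega(\text{mult}(\omega))^2$.

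For the tail bound \eqref{eq:tail}, apply Markov's (Chebyshev's) inequality to the uniform distribution on $a\in\Z_2^d$. Each $a$ with $|M(a)|>L$ contributes more than $L^2$ to $\sum_a M(a)^2$, and every other term is nonnegative. Hence
\[
L^2\,|\{a:|M(a)|>L\}|\le\sum_a M(a)^2=2^d\Delta.
\]
Dividing by $2^dL^2$ gives the claim.

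There is no real obstacle here. The only care needed is to correctly identify the diagonal term through orthogonality in the character variable, and to note that the identity holds exactly, with no error term, for every $d$ and every set $\Omega$ of distinct nonzero elements.
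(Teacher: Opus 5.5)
Your proof is correct and matches the paper's argument: you expand $M(a)^2$ as a double sum over $\Omega\times\Omega$, use multiplicativity and orthogonality in the character variable to isolate the diagonal $\omega=\omega'$ (where distinctness gives exactly $\Delta$), and then apply Markov's inequality to $M(a)^2$ under the uniform distribution on $\Z_2^d$.
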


\begin{proof}
Expanding and using character orthogonality,
\begin{align*}
 \sum_a M(a)^2
 &=\sum_a \sum_{\omega,\omega'\in\Omega}
   \chi_a(\omega)\chi_a(\omega')\\
 &=\sum_{\omega,\omega'\in\Omega}\sum_a
   \chi_a(\omega\oplus\omega')\\
 &=2^d\sum_{\omega,\omega'\in\Omega}
   \mathbf 1_{\omega=\omega'}
 =2^d\Delta.
\end{align*}
This proves \eqref{eq:secondmoment}; \eqref{eq:tail} is Markov's inequality
applied to $M(a)^2$.
\end{proof}

\section{One-shot hitting time theorem}
We now prove the hitting time theorem for cubelike graphs.  The proof keeps all estimates uniform in
$d$ and in the choice of $\Omega$.

\begin{theorem}[ One-shot hitting time on cubelike graphs]
\label{thm:main}
Let
\[
G=\Cay(\mathbb Z_2^d,\Omega)
\]
be a cubelike graph with degree $\Delta=|\Omega|$, and consider the
Grover-coined moving-shift quantum walk starting from
\[
|\psi_0\rangle=|D\rangle\otimes|0\rangle.
\]
Define the canonical target vertex by
\[
\sigma=\bigoplus_{\omega\in\Omega}\omega.
\]

Choose an integer $T$ having the same parity as $\Delta$ and satisfying
\[
\left|T-\frac{\pi\Delta}{2}\right|\leq 1.
\]
Then, for any family of such graphs with $\Delta\to\infty$,
\begin{equation}
\label{eq:mainbound}
p_T(\sigma)
=
1-O(\Delta^{-1/5}).    
\end{equation}

In particular, after $T=\Theta(\Delta)$ steps, a measurement of the
position finds the walker at $\sigma$ with probability tending to one
as $\Delta\to\infty$.
\end{theorem}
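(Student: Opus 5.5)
By Proposition~\ref{prop:amplitude} we have $p_T(\sigma)\ge |A_T(\sigma)|^2$. It therefore suffices to show
\[
A_T(\sigma)=2^{-d}\sum_a\chi_a(\sigma)\cos(T\theta_a)=1-O(\Delta^{-1/5}),
\]
since then $|A_T(\sigma)|^2\ge 1-O(\Delta^{-1/5})$. The idea is that for a typical mode $a$ the character sum $M(a)$ is small, so $\theta_a$ is close to $\pi/2$. At $T\approx \pi\Delta/2$ the phase $T\theta_a$ then lands near a multiple of $\pi$ whose sign is exactly $\chi_a(\sigma)$, which gives constructive interference. Atypical modes are few by Lemma~\ref{lem:secondmoment}, and each contributes at most $2^{-d}$ in absolute value.

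\textbf{Local expansion.} Write $\theta_a=\arccos(M(a)/\Delta)=\pi/2-\arcsin(M(a)/\Delta)$. Set $\varepsilon_a:=\arcsin(M(a)/\Delta)$, so that $\varepsilon_a=M(a)/\Delta+O(|M(a)|^3/\Delta^3)$. Then
\[
T\theta_a = T\pi/2 - T\varepsilon_a.
\]
Since $T\equiv\Delta \pmod 2$ and $M(a)\equiv\Delta \pmod 2$ (because $M(a)=\Delta-2r(a)$), write $T=\Delta+2k$ with $k$ an integer. Then
\[
\cos(T\pi/2-x)=(-1)^{T/2}\cos x \quad (T \text{ even}),\qquad \cos(T\pi/2-x)=(-1)^{(T-1)/2}\sin x \quad (T \text{ odd}).
\]
This form is awkward, so I would instead argue with the phase directly. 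Write $T\theta_a=\tfrac{\pi}{2}(T-M(a))+\tfrac{\pi}{2}M(a)-T\varepsilon_a$. Using $T=\pi\Delta/2+\eta$ with $|\eta|\le1$, we get
\[
T\varepsilon_a=\tfrac{\pi}{2}M(a)+O\bigl(|M(a)|/\Delta+|M(a)|^3/\Delta^2\bigr).
\]
Hence $T\theta_a=\tfrac{\pi}{2}(T-M(a))+\rho_a$ with $\rho_a=O(|M(a)|/\Delta+|M(a)|^3/\Delta^2)$. Because $T-M(a)\equiv T-\Delta\equiv 0\pmod 2$, the angle $\tfrac{\pi}{2}(T-M(a))$ is a multiple of $\pi$, and
\[
\cos\bigl(\tfrac{\pi}{2}(T-M(a))\bigr)=(-1)^{(T-M(a))/2}=(-1)^{(T-\Delta)/2}\,\chi_a(\sigma)
\]
by Proposition~\ref{prop:phase}. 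So $\chi_a(\sigma)\cos(T\theta_a)=(-1)^{(T-\Delta)/2}\cos\rho_a$. The global sign $(-1)^{(T-\Delta)/2}$ is irrelevant, since only $|A_T(\sigma)|$ is needed.

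\textbf{Splitting the sum.} Fix a threshold $L$ and split the modes into good modes with $|M(a)|\le L$ and bad modes with $|M(a)|>L$. For a good mode, $1-\cos\rho_a\le \rho_a^2/2=O(L^2/\Delta^2+L^6/\Delta^4)$. For a bad mode, the term has absolute value at most $1$, which is at most $2$ away from the ideal value; by \eqref{eq:tail} the bad modes carry total weight $2^{-d}\cdot\#\le \Delta/L^2$. Combining,
\[
\bigl|A_T(\sigma)\bigr|\ge 1-O\bigl(L^6/\Delta^4+L^2/\Delta^2\bigr)-2\Delta/L^2.
\]
Balancing $L^6/\Delta^4$ against $\Delta/L^2$ gives $L=\Delta^{5/8}$ and an error of order $\Delta^{-1/4}$, which is better than the claimed rate. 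A cruder bound $|\rho_a|=O(L^3/\Delta^2)$ inserted linearly gives $L^3/\Delta^2$ versus $\Delta/L^2$, so $L=\Delta^{3/5}$ and an error of order $\Delta^{-1/5}$. This matches the statement, so either route suffices.

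\textbf{Main obstacle.} The delicate step is controlling the phase error $\rho_a$ uniformly. There are two sources: the cubic term of the arcsine expansion, which is multiplied by $T\sim\Delta$ and so is only small when $|M(a)|\ll\Delta^{2/3}$, and the $O(1)$ mismatch $\eta$ between $T$ and $\pi\Delta/2$. Both require $|M(a)|$ to be well below $\Delta^{2/3}$, which is exactly why the truncation at $L=\Delta^{3/5}$ and the second-moment tail bound are needed. The parity bookkeeping, namely that $T-M(a)$ is even so the sign collapses to $\chi_a(\sigma)$, is the other point that must be checked carefully; it relies on $M(a)\equiv\Delta\pmod 2$.
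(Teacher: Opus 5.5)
Your proposal is correct and follows essentially the same route as the paper. The shared ingredients are the good/bad split at $L=\Delta^{3/5}$ via the second-moment tail bound, the rewriting $T\theta_a=\frac{\pi}{2}(T-M(a))+\rho_a$ in which the parity of $T-M(a)$ and the product-phase identity cancel the mode-dependent sign up to the global factor $(-1)^{(\Delta-T)/2}$, and a uniform Taylor bound on $\rho_a$ over good modes.

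Your side remark is also right. The paper already uses $1-\cos\rho_a=O(\rho_a^2)=O(\Delta^{-2/5})$ on good modes, so its $\Delta^{-1/5}$ comes entirely from the bad-mode fraction, and rebalancing at $L=\Delta^{5/8}$ would give $O(\Delta^{-1/4})$.
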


\begin{proof}

Let
\begin{equation}\label{eq:eta}
 \kappa_{\Delta,T}:=(-1)^{(\Delta-T)/2}\in\{\pm1\}.
\end{equation}
We prove that $A_T(\sigma)=\kappa_{\Delta,T}+O(\Delta^{-1/5})$.

\medskip
\noindent\textbf{Step 1: Typical set of Fourier modes.}
Choose
\begin{equation}\label{eq:L}
 L:=\Delta^{3/5},
 \qquad
 \mathcal G:=\{a\in\Z_2^d:|M(a)|\le L\}.
\end{equation}
By Lemma~\ref{lem:secondmoment},
\begin{equation}\label{eq:badfraction}
 2^{-d}|\mathcal G^c|\le\frac\Delta{L^2}=\Delta^{-1/5}.
\end{equation}
For $a\in\mathcal G$,
$|M(a)|/\Delta\le\Delta^{-2/5}$, which is at most $1/2$ for all sufficiently
large $\Delta$.

\medskip
\noindent\textbf{Step 2: Rewrite $\cos(T\theta_a)$}

Since
\begin{align*}
    \cos(\theta_a) &= \frac{M(a)}{\Delta}\\
    \sin(\frac{\pi}{2} -\theta_a)&= \frac{M(a)}{\Delta}\\
    \theta_a &=\frac{\pi}{2} -\sin^{-1}(\frac{M(a)}{\Delta})\\
    T\theta_a &=\frac{T\pi}{2} -T\sin^{-1}(\frac{M(a)}{\Delta})
\end{align*}
Add and subtract $\frac{\pi M(a)}{2}$
\begin{equation*}
    T \theta_a = \pi\Big[\frac{T-M(a)}{2}\Big] +\Big[\frac{\pi M(a)}{2}-T\sin^{-1}(\frac{M(a)}{\Delta})\Big]
\end{equation*}
Define 
\[\rho_a := \frac{\pi M(a)}{2}-T\sin^{-1}(\frac{M(a)}{\Delta})\]
Now $T-M(a)$ is an  even integer because $T$ has the same parity as $\Delta$ and by proposition \ref{prop:phase} $\Delta -M(a)$ is even we get
\begin{equation}
    \cos(T\theta_a) = (-1)^{\frac{T-M(a)}{2}}\cos \rho_a
\end{equation}

\medskip
\noindent\textbf{Step 3: Combine with character phase.}
\medskip

By proposition \ref{prop:phase}
\begin{equation*}
    \chi_a(\sigma) = (-1)^\frac{\Delta-M(a)}{2}
\end{equation*}
\begin{align*}
    \chi_a(\sigma) \cos(T\theta_a) &= (-1)^{\frac{\Delta-M(a)}{2}}(-1)^{\frac{T-M(a)}{2}}\cos \rho_a \\
    \chi_a(\sigma) \cos(T\theta_a) &= (-1)^{\frac{\Delta+T}{2}-M(a)}\cos \rho_a
\end{align*}
Now,
\begin{equation*}
    \frac{\Delta+T}{2}-M(a) -\frac{T-\Delta}{2}= \Delta-M(a)
\end{equation*}
Since $\Delta-M(a)=2r(a)$  is  even, the two powers of $-1$ agree
\begin{equation*}
    (-1)^{(\frac{\Delta+T}{2}-M(a))} = (-1)^{\frac{\Delta-T}{2}}
\end{equation*}
So
\begin{align}
\label{eq:aligned}
     \chi_a(\sigma) \cos(T\theta_a) &= (-1)^{\frac{\Delta-T}{2}}\cos\rho_a\\ \nonumber
     \chi_a(\sigma) \cos(T\theta_a) &= \kappa_{\Delta,T}\cos\rho_a
\end{align}
Note that $\kappa_{\Delta,T}$ is independent of $a$

\medskip
\noindent\textbf{Step 4: Uniform Taylor estimate for $\rho_a$ on good modes.}
For $a \in \mathcal G,\, \frac{|M(a)|}{\Delta}\leq \Delta^{-2/5}$. Since
\begin{align*}
    \sin^{-1}(x) &= x +O(x^3) \qquad \text{for $|x| \leq \frac{1}{2}$}\\
    \sin^{-1}(\frac{M(a)}{\Delta}) &= \frac{M(a)}{\Delta} +O\Big(\frac{M(a)^3}{\Delta^3} \Big)
\end{align*}

Also 
\begin{align*}
    T= \frac{\pi\Delta}{2} +\eta \quad |\eta| \leq 1
\end{align*}

\begin{align*}
    \rho_a &= \frac{\pi M(a)}{2} - T\sin^{-1}\Big(\frac{M(a)}{\Delta}\Big)\\
    &= \frac{\pi M(a)}{2} - \Big(\frac{\pi\Delta}{2} +\eta\Big)\Big[\frac{M(a)}{\Delta} +O\Big(\frac{M^3}{\Delta^3}\Big) \Big]\\
    &= \frac{-\eta M(a)}{\Delta} +O\Big(\frac{|M(a)|^3}{\Delta^2}\Big)
\end{align*}
On a good mode $|M(a)| \leq \Delta^{3/5}$ and $\frac{|M(a)|}{\Delta} \leq \Delta^{-2/5}$  and 
\begin{equation*}
\frac{|M(a)|^3}{\Delta^2} = \frac{\Delta^{9/5}}{\Delta^2}= \Delta^{-1/5}
\end{equation*}
So since the Taylor remainder  is uniform for $x\leq \frac{1}{2}$ is we obtain
\begin{align*}
    |\rho_a| = O(\Delta^{-1/5})
\end{align*}
uniformly for $a\in \mathcal{G}$

\medskip
\noindent\textbf{Step 5: Average of the Fourier modes.}
By Proposition~\ref{prop:amplitude},
\[
 A_T(\sigma)=2^{-d}\sum_{a\in\mathcal G}
 \chi_a(\sigma)\cos(T\theta_a)
 +2^{-d}\sum_{a\notin\mathcal G}
 \chi_a(\sigma)\cos(T\theta_a).
\]
The second sum has absolute value at most
$2^{-d}|\mathcal G^c|=O(\Delta^{-1/5})$ by \eqref{eq:badfraction}.
Substituting \eqref{eq:aligned} in the first sum, yields
\begin{equation}\label{eq:Aestimate}
 A_T(\sigma)=2^{-d}\sum_{a\in\mathcal G}\kappa_{\Delta,T}\cos(\rho_a)+O(\Delta^{-1/5}).
\end{equation}
Furthermore on good modes we have seen that $|\rho_a| = O(\Delta^{-1/5})$ and for sufficiently large $\Delta$  we have $\cos(\rho_a) = 1 -O(\rho_a^2) = 1- O(\Delta^{-2/5})$ and we get
\begin{align*}\label{eq:Aestimate}
 A_T(\sigma)&=\kappa_{\Delta,T}2^{-d}\sum_{a\in\mathcal G}(1-O(\Delta^{-2/5)})+O(\Delta^{-1/5})\\
A_T(\sigma)&=\kappa_{\Delta,T}(1-O(\Delta^{-2/5)})2^{-d}\sum_{a\in\mathcal G}1+O(\Delta^{-1/5})\\
A_T(\sigma)&=\kappa_{\Delta,T}(1-O(\Delta^{-2/5})(1-\frac{|\mathcal G^c|}{2^d})+O(\Delta^{-1/5})\\
A_T(\sigma)&=\kappa_{\Delta,T}(1-O(\Delta^{-2/5})(1-O(\Delta^{-1/5})+O(\Delta^{-1/5})\\
\end{align*}
Which gives
\begin{equation}
A_T(\sigma)=\kappa_{\Delta,T}+O(\Delta^{-1/5})\\
\end{equation}
Hence
\[
 |A_T(\sigma)|^2=1-O(\Delta^{-1/5}).
\]
Finally, \eqref{eq:lowerbound} gives
$p_T(\sigma)\ge|A_T(\sigma)|^2$.  Since probabilities are at most one,
\eqref{eq:mainbound} follows.
\end{proof}

\section{Concurrent hitting time theorem}
We first establish a general relation between one-shot and concurrent
detection probabilities. The following lemma is independent of the
underlying graph and of the particular structure of the quantum walk;
it requires only unitary evolution and repeated projective measurement
of the target subspace.

\begin{lemma}[One-shot and concurrent detection probabilities]
\label{lem:oneshot-concurrent}
Let $U$ be a unitary operator on a finite-dimensional Hilbert space,
let $P$ be an orthogonal projector onto a target subspace, and let
$Q=I-P$. For an initial state $|\psi_0\rangle$, define
\[
p_T=\|PU^T|\psi_0\rangle\|^2
\]
and
\[
H_T^{\mathrm{Conc}}
=
\sum_{t=1}^T
\|PU(QU)^{t-1}|\psi_0\rangle\|^2.
\]
Then, for every integer $T\geq1$,
\[
p_T\leq T H_T^{\mathrm{Conc}}.
\]
\end{lemma}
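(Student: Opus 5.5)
We decompose the unmeasured evolution by the time of the first detection. Write $U = U(P+Q)$ repeatedly. The key identity, proved by induction on $T$, is
\[
U^T = (UQ)^{T}\cdots
\]
More usefully, we expand on the right via $I=P+Q$ inserted after each application of $U$. The resulting telescoping identity is
\[
U^T\ket{\psi_0} = \sum_{t=1}^{T} U^{T-t}\,PU(QU)^{t-1}\ket{\psi_0} + (QU)^T\ket{\psi_0}.
\]
This follows from $U^{T-t+1}(QU)^{t-1} = U^{T-t}(P+Q)U(QU)^{t-1} = U^{T-t}PU(QU)^{t-1} + U^{T-t}(QU)^{t}$, summed over $t=1,\dots,T$, with the intermediate terms cancelling.

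Next we apply $P$ to both sides. The last term is $P(QU)^T\ket{\psi_0}=PQ\,U(QU)^{T-1}\ket{\psi_0}=0$, since $PQ=0$ (for $T\ge1$). Hence
\[
PU^T\ket{\psi_0} = \sum_{t=1}^T PU^{T-t}\ket{\phi_t},\qquad \ket{\phi_t}:=PU(QU)^{t-1}\ket{\psi_0}.
\]
We then bound the norm by the triangle inequality, using $\|PU^{T-t}\|\le 1$ (since $P$ is a projector and $U$ is unitary):
\[
\sqrt{p_T}=\|PU^T\ket{\psi_0}\|\le\sum_{t=1}^T\|\ket{\phi_t}\|.
\]
Finally, Cauchy--Schwarz gives
\[
\Bigl(\sum_{t=1}^T\|\phi_t\|\Bigr)^2\le T\sum_{t=1}^T\|\phi_t\|^2 = T H_T^{\mathrm{Conc}},
\]
so $p_T\le T H_T^{\mathrm{Conc}}$.

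The only delicate step is setting up the first-detection decomposition with the correct ordering of $P$, $Q$ and $U$, so that it matches the definition of $H_T^{\mathrm{Conc}}$ and the residual term is annihilated by $P$. Everything after that is the triangle inequality and Cauchy--Schwarz. For the application, combining this with Theorem~\ref{thm:main} at $T=\Theta(\Delta)$ gives $H_T^{\mathrm{Conc}}(\sigma)\ge p_T(\sigma)/T=\Omega(\Delta^{-1})$, taking $P=\Id\otimes\ket{\sigma}\bra{\sigma}$.
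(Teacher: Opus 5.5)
Your proposal is correct and follows the paper's proof essentially step for step. Both use the same first-detection identity $U^T\ket{\psi_0}=\sum_{t=1}^T U^{T-t}PU(QU)^{t-1}\ket{\psi_0}+(QU)^T\ket{\psi_0}$, then kill the remainder with $PQ=0$, apply the triangle inequality with $\|PU^{T-t}\|\le 1$, and finish with Cauchy--Schwarz. Your explicit telescoping derivation of the identity is tidier than the paper's ``continuing in this manner,'' but you should delete the abandoned fragment $U^T=(UQ)^{T}\cdots$.
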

\begin{proof}
For $t=1,\ldots,T$, define the unnormalized first-detection state
\[
|\phi_t\rangle
=
PU(QU)^{t-1}|\psi_0\rangle .
\]
Its squared norm is the probability that the target is detected for
the first time at step $t$:
\[
q_t=\|\phi_t\|^2.
\]
Hence
\[
H_T^{\mathrm{Conc}}
=
\sum_{t=1}^T q_t.
\]

We first relate the unmeasured state at time $T$ to the first-detection
states. Repeatedly inserting
\[
I=P+Q
\]
between successive applications of $U$ gives
\begin{align*}
U^T|\psi_0\rangle &=U^{T-1}(P+Q)U|\psi_0\rangle \\
&=U^{T-1}PU\ket{\psi_0} + U^{T-1}QU\ket{\psi_0}\\
&= U^{T-1}PU\ket{\psi_0} + U^{T-2}(P+Q)UQU\ket{\psi_0}\\
&= U^{T-1}PU\ket{\psi_0} + U^{T-2}PUQU\ket{\psi_0} +U^{T-2}(QU)^2\ket{\psi_0}
\end{align*}
Continuing in this manner we obtain
\begin{equation}
U^T|\psi_0\rangle
=
\sum_{t=1}^T
U^{T-t}PU(QU)^{t-1}|\psi_0\rangle
+
(QU)^T|\psi_0\rangle.
\end{equation}
Multiplying by $P$ from the left, the last term vanishes because
$PQ=0$. Therefore,
\begin{equation}
PU^T|\psi_0\rangle = \sum_{t=1}^T PU^{T-t}PU(QU)^{t-1}|\psi_0\rangle =
\sum_{t=1}^T
PU^{T-t}|\phi_t\rangle.
\end{equation}

Taking norms and using the triangle inequality,
\[
\|PU^T|\psi_0\rangle\|
\leq
\sum_{t=1}^T
\|PU^{T-t}|\phi_t\rangle\|.
\]
Because $P$ is an orthogonal projection and $U$ is unitary,
\[
\|PU^{T-t}|\phi_t\rangle\|
\leq
\|U^{T-t}|\phi_t\rangle\|
=
\|\phi_t\|.
\]
Hence
\[
\sqrt{p_T}
=
\|PU^T|\psi_0\rangle\|
\leq
\sum_{t=1}^T \|\phi_t\|
=
\sum_{t=1}^T \sqrt{q_t}.
\]

Applying the Cauchy--Schwarz inequality,
\[
\left(
\sum_{t=1}^T \sqrt{q_t}
\right)^2
\leq
T\sum_{t=1}^T q_t.
\]
Squaring the previous inequality therefore yields
\[
p_T
\leq
T\sum_{t=1}^T q_t
=
T H_T^{\mathrm{Conc}}.
\]
\end{proof}

We now prove the concurrent hitting time theorem for cubelike graphs.  
\begin{theorem}[Concurrent hitting time theorem]
\label{thm:Concurrent}
Let $G=\Cay(\mathbb{Z}_2^d,\Omega)$ be a cubelike graph of degree $\Delta$. Let
\begin{equation*}
\sigma =\bigoplus_{\omega \in \Omega} \omega
\end{equation*}
Consider the Grover-coined moving-shift walk on $G$, with the target
measured after every step using the projectors
\[
P=I\otimes|\sigma\rangle\langle\sigma|,
\qquad Q=I-P.
\]
Start the walk from $\ket{\psi_0} = \ket{D}\otimes \ket{0}$. Choose $T$ such that $T\equiv \Delta \mod 2$ and $\Big|T-\frac{\pi\Delta}{2}\Big| \leq 1$, then the probability of detecting $\sigma$ by time $T$ satisfies
\begin{equation}
    H_T^{\text{Conc}}(\sigma) = \Omega\Big(\frac{1}{\Delta}\Big)
\end{equation}
\end{theorem}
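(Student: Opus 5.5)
The plan is to combine Lemma~\ref{lem:oneshot-concurrent} with Theorem~\ref{thm:main}.

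First we check that the setting of Theorem~\ref{thm:Concurrent} fits the abstract lemma. The operator $U=S(C\otimes\Id)$ is unitary on the finite-dimensional space $\mathcal H_C\otimes\mathcal H_P$. The projector is $P=\Id\otimes\ket\sigma\bra\sigma$, and the initial state is $\ket{\psi_0}=\ket D\otimes\ket 0$. With these choices,
\[
\|PU^T\ket{\psi_0}\|^2=\sum_{\omega\in\Omega}\left|\bra{\omega,\sigma}U^T\ket{\psi_0}\right|^2=p_T(\sigma),
\]
which is exactly \eqref{eq:positionprob}. Likewise, $\sum_{t=1}^T\|PU(QU)^{t-1}\ket{\psi_0}\|^2$ is the probability that the $\sigma$-measured walk stops at some time $t\le T$, i.e. $H_T^{\mathrm{Conc}}(\sigma)$. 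To justify this identification we note that the measurement outcomes at distinct times are mutually exclusive. Hence the probability of first detection at step $t$ equals the squared norm of the unnormalized post-selected state, and these probabilities add over $t$.

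Applying Lemma~\ref{lem:oneshot-concurrent} gives
\[
H_T^{\mathrm{Conc}}(\sigma)\ge p_T(\sigma)/T.
\]
We take $T$ as in the statement, with $T\equiv\Delta \pmod 2$ and $|T-\pi\Delta/2|\le 1$. Theorem~\ref{thm:main} then gives $p_T(\sigma)=1-O(\Delta^{-1/5})$, so $p_T(\sigma)\ge 1/2$ for all sufficiently large $\Delta$. We also have $T\le \pi\Delta/2+1\le 2\Delta$ for $\Delta\ge 2$. Therefore
\[
H_T^{\mathrm{Conc}}(\sigma)\ge \frac{1}{4\Delta}
\]
for large $\Delta$, which is $\Omega(1/\Delta)$.

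If the $\Omega$ is meant uniformly for all $\Delta$, the finitely many small-$\Delta$ cases need separate treatment. In each of them we can still use $H_T^{\mathrm{Conc}}(\sigma)\ge |A_T(\sigma)|^2/T$ from \eqref{eq:lowerbound}. Alternatively, we can read the statement asymptotically, in the same sense as Theorem~\ref{thm:main}, for families with $\Delta\to\infty$.

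The main obstacle is not analytic. It is making sure the hypotheses match: the measured walk defined in Section~2 must coincide with the operator expression in the lemma, including the convention that measurement happens after each application of $U$ and that there is no check at $t=0$. There is no detection at $t=0$ because $\sigma\neq 0$ whenever $\Omega$ does not XOR to zero. If $\sigma=0$, the $t=0$ issue is immaterial, since the lemma's sum starts at $t=1$ anyway.
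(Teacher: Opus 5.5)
Your proposal is correct and follows the paper's proof: apply Lemma~\ref{lem:oneshot-concurrent} to get $H_T^{\mathrm{Conc}}(\sigma)\ge p_T(\sigma)/T$, then combine it with Theorem~\ref{thm:main} and $T=\Theta(\Delta)$. Your explicit constants ($p_T(\sigma)\ge 1/2$ and $T\le 2\Delta$) are a concrete version of the paper's estimate $(1-O(\Delta^{-1/5}))(2/(\pi\Delta)+O(\Delta^{-2}))$; note that $\pi\Delta/2+1\le 2\Delta$ actually requires $\Delta\ge 3$, which does not matter asymptotically.
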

\begin{proof}
By Theorem~\ref{thm:main},
\[
p_T(\sigma)=1-O(\Delta^{-1/5}).
\]
Applying Lemma~\ref{lem:oneshot-concurrent} 
\[
H_T^{\mathrm{Conc}}(\sigma)
\ge
\frac{p_T(\sigma)}{T}.
\]
Since
\[
T=\frac{\pi\Delta}{2}+O(1),
\]
\begin{equation*}
    \frac{1}{T} = \frac{2}{\pi\Delta} + O(\Delta^{-2})
\end{equation*}
So,
\begin{equation*}
H_T^{\mathrm{Conc}}(\sigma)
\geq \Big(1-O(\Delta^{-1/5})\Big) \Big(\frac{2}{\pi\Delta} + O(\Delta^{-2})\Big)
\end{equation*}
Therefore,
\begin{equation}
    H_T^{\mathrm{Conc}}(\sigma) = \Omega(\frac{1}{\Delta})
\end{equation}
\end{proof}
The concurrent-hitting bound also gives an operational interpretation.
A single measured walk of length $T=\Theta(\Delta)$ detects the target
with probability at least $\Omega(\Delta^{-1})$. Consequently, by
restarting the walk independently $O(\Delta)$ times, the probability of
detecting the target can be amplified to a constant bounded away from
zero, using a total of $O(\Delta^2)$ walk steps. More generally,
$O(\Delta\log(1/\varepsilon))$ independent runs suffice to obtain
success probability at least $1-\varepsilon$, corresponding to
$O(\Delta^2\log(1/\varepsilon))$ total walk steps.

For the $d$-dimensional hypercube, where $\Delta=d$, the preceding
bound gives
\[
H_T^{\mathrm{Conc}}(\sigma)=\Omega(1/d).
\]
Kempe's original analysis established the weaker lower bound
$\Omega(1/(d\log^2 d))$ within a linear number of steps
\cite{Kempe2005}. Thus, for the hypercube, the present argument removes
the logarithmic loss in Kempe's concurrent-hitting bound. We are not
aware of a previously published analytic $\Omega(1/d)$ bound for
Kempe's concurrent-hitting formulation.

\section{Remarks and Examples}

We collect several consequences and examples that illustrate the scope of Theorems \ref{thm:main} and \ref{thm:Concurrent}. A notable feature of the argument is that the detailed
geometry of the cubelike graph does not enter the proof. 

\subsection{Parity of the degree}

The  theorem applies uniformly to both even and odd degrees.
Indeed,
\[
M(a)=\Delta-2r(a),
\]
where
\[
r(a)=|\{\omega\in\Omega:\chi_a(\omega)=-1\}|.
\]
Consequently,
\[
M(a)\equiv\Delta\pmod 2.
\]
Choosing the observation time \(T\) so that
\[
T\equiv\Delta\pmod 2
\]
therefore guarantees
\[
\frac{T-M(a)}{2}\in\mathbb Z
\]
for every Fourier mode \(a\). This is the parity condition needed in the
phase-alignment argument.

Thus the relevant distinction is not between even- and odd-degree
cubelike graphs. Rather, the observation time must be chosen with the
same parity as the degree. In both cases one may choose \(T\) satisfying
\[
\left|T-\frac{\pi\Delta}{2}\right|\le 1.
\]

\subsection{The canonical target need not be an antipode}

The distinguished vertex
\[
\sigma=\bigoplus_{\omega\in\Omega}\omega
\]
is defined algebraically and need not coincide with a graph-theoretic
antipode or a vertex at maximum distance from the origin. The role of
\(\sigma\) in the proof follows instead from the character identity
\[
\chi_a(\sigma)
=
(-1)^{(\Delta-M(a))/2},
\]
which precisely cancels the mode-dependent sign produced by the quantum
walk at the observation time \(T\).

Thus the theorem should be interpreted as a statement about coherent
transport to an algebraically distinguished vertex. In special families,
such as the hypercube, this vertex is also the usual antipode.

\subsection{The case \texorpdfstring{\(\sigma=0\)}{sigma=0}}

It is possible that
\[
\bigoplus_{\omega\in\Omega}\omega=0.
\]
In this case the theorem gives asymptotically perfect return to the
initial position rather than transport to a distinct vertex:
\[
p_T(0)=1-O(\Delta^{-1/5}).
\]

The  theorem remains valid without modification. For concurrent
hitting, however, one should interpret the result as a first-return
problem, with measurements beginning after the first step rather than at
time zero.

\subsection{Hypercubes}

The \(d\)-dimensional hypercube is obtained from
\[
\Omega=\{e_1,\ldots,e_d\},
\]
where \(e_i\) denotes the \(i\)-th standard basis vector of
\(\mathbb Z_2^d\). Here
\[
\Delta=d
\]
and
\[
\sigma=e_1\oplus\cdots\oplus e_d=(1,\ldots,1).
\]
Thus the canonical target is exactly the usual antipodal vertex.

For a character indexed by \(a\in\mathbb Z_2^d\),
\[
M(a)
=
\sum_{j=1}^d(-1)^{a_j}
=
d-2|a|,
\]
where \(|a|\) is the Hamming weight of \(a\). Hence
\[
\cos\theta_a
=
1-\frac{2|a|}{d}.
\]

Theorem~7 therefore recovers the hypercube phase-alignment mechanism
without using the explicit binomial multiplicities of the Hamming-weight
classes. If \(T\) is the integer satisfying
\[
T\equiv d\pmod2,
\qquad
\left|T-\frac{\pi d}{2}\right|\le1,
\]
then
\[
p_T(1^d)=1-O(d^{-1/5}).
\]

The estimate is weaker than results that exploit the full permutation
symmetry of the hypercube, but the present proof has the advantage that
it extends unchanged to arbitrary cubelike generating sets.

\subsection{Augmented cubes}
Augmented cubes were introduced as an alternative interconnection-network topology to the hypercube, with high connectivity and small diameter
\cite{ChoudumSunitha2002}. These properties are desirable in parallel and distributed networks, where high connectivity provides robustness
against processor or communication-link failures.
The augmented cube \(AQ_n\) considered in~\cite{MulherkarRajdeepakSunitha2022}
is the cubelike graph
\[
AQ_n=\operatorname{Cay}(\mathbb Z_2^n,\Omega_n),
\]
with generating set
\[
\Omega_n
=
\{e_1,\ldots,e_n\}
\cup
\{0^{\,n-i}1^i:2\le i\le n\}.
\]
Its degree is
\[
\Delta=2n-1.
\]

The canonical target is
\[
\sigma_n
=
\bigoplus_{\omega\in\Omega_n}\omega.
\]
Since \(2n-1\) is odd, the parity-matched observation time is the nearest
odd integer \(T_n\) to
\[
\frac{\pi(2n-1)}{2}.
\]
Theorem~7 gives
\[
p_{T_n}(\sigma_n)
=
1-O(n^{-1/5}).
\]

Thus the augmented cube provides a nontrivial example in which the
generating set contains vectors of several different Hamming weights,
yet the same Fourier argument applies without any modification. In
particular, neither permutation symmetry nor bounded generator weight is
required.

The concurrent result similarly gives
\[
H^{\mathrm{Conc}}_{T_n}(\sigma_n)
=
\Omega\!\left(\frac1n\right)
\]
within
\[
T_n=O(n)
\]
steps.
\subsection{Numerical illustration}

To illustrate the general nature of Theorem~\ref{thm:main}, we numerically
simulate the Grover-coined moving-shift walk on three families of
cubelike graphs: hypercubes, augmented cubes, and randomly generated
connected cubelike graphs. For each dimension \(d\), the observation time
\(T\) is chosen to be the integer satisfying
\[
T\equiv\Delta\pmod 2
\]
that is closest to \(\pi\Delta/2\). The target is the canonical vertex
\[
\sigma=\bigoplus_{\omega\in\Omega}\omega.
\]

For the random cubelike examples, the generating set is chosen uniformly
from distinct nonzero vectors subject to spanning \(\mathbb Z_2^d\); its
degree is taken to be \(2d-1\), matching the degree of the augmented cube.
Figure~\ref{fig:numerical-hitting} shows the resulting target probability
\(p_T(\sigma)\).

\begin{figure}[ht]
    \centering
    \includegraphics[width=0.78\textwidth]
        {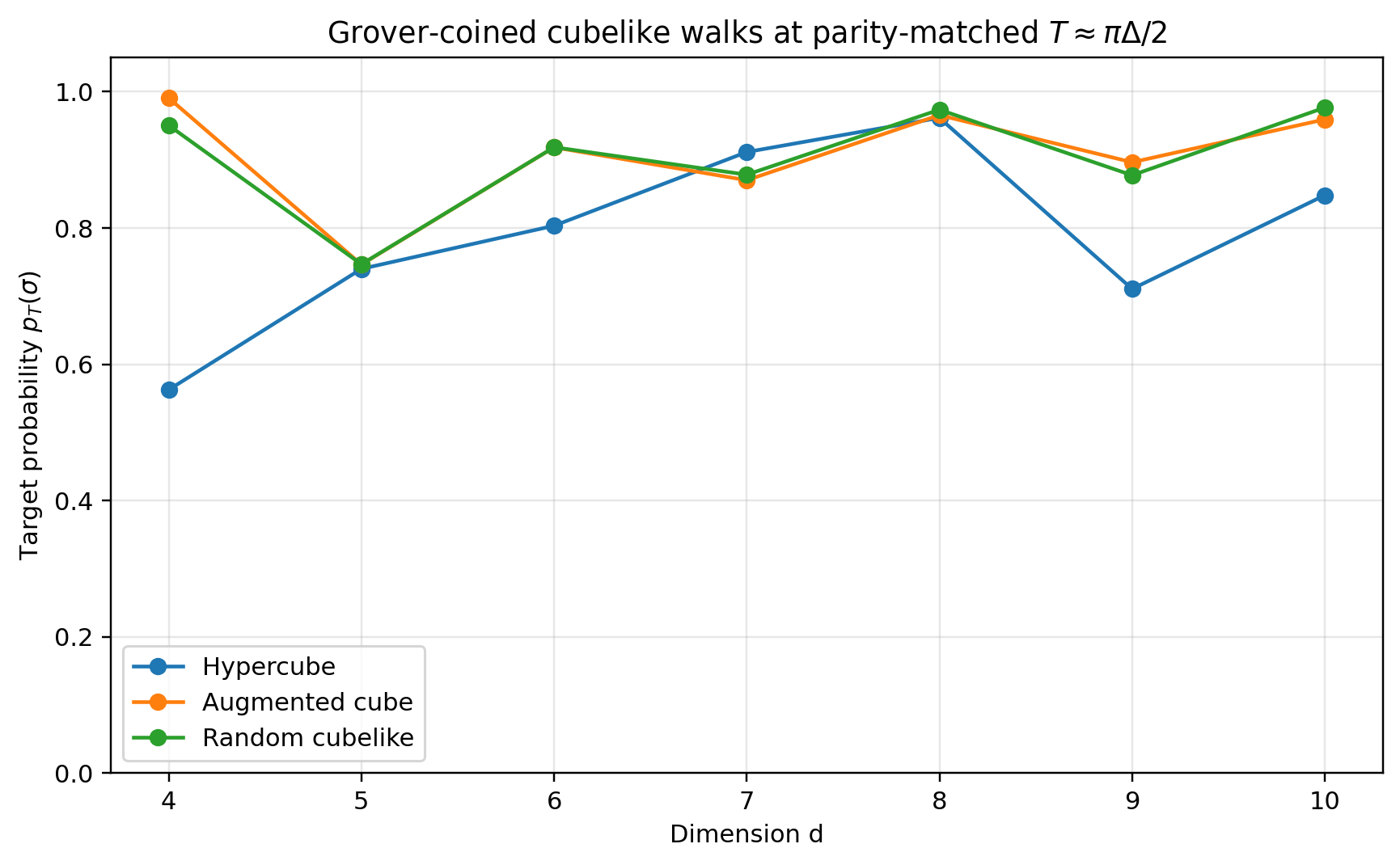}
    \caption{Target probability \(p_T(\sigma)\) for Grover-coined walks
    on hypercubes, augmented cubes, and one randomly generated connected
    cubelike graph in each dimension. The observation time is the
    parity-matched integer nearest to \(\pi\Delta/2\). The numerical
    results illustrate the high target probabilities predicted by
    Theorem~\ref{thm:main}, while also showing finite-size oscillations.}
    \label{fig:numerical-hitting}
\end{figure}

The probabilities need not increase monotonically with dimension, since
the theorem is asymptotic and does not provide a monotonicity statement.
Nevertheless, the examples illustrate that the phase-alignment phenomenon
is not restricted to highly symmetric families such as the hypercube or
the augmented cube.
\subsection{Complete cubelike graphs}

A contrasting example is obtained by taking
\[
\Omega=\mathbb Z_2^d\setminus\{0\}.
\]
The resulting Cayley graph is the complete graph
\[
K_{2^d}.
\]
Writing
\[
N=2^d,
\]
its degree is
\[
\Delta=N-1=2^d-1.
\]

For \(d\ge2\), the XOR of all nonzero elements of
\(\mathbb Z_2^d\) vanishes:
\[
\sigma
=
\bigoplus_{\omega\ne0}\omega
=
0.
\]
Indeed, each coordinate is equal to \(1\) in exactly \(2^{d-1}\) vectors,
which is even for \(d\ge2\).

Thus, for complete cubelike graphs, the canonical hitting theorem
describes a return to the origin rather than transfer to a different
vertex.

The Fourier sums take an especially simple form. For the trivial
character,
\[
M(0)=\Delta,
\]
while for every \(a\ne0\),
\[
M(a)
=
\sum_{\omega\ne0}\chi_a(\omega)
=
-1.
\]
Hence all nontrivial Fourier modes have the same angle
\[
\theta
=
\arccos\!\left(-\frac1\Delta\right).
\]

The uniform-coin return amplitude is therefore
\[
A_T(0)
=
\frac{1}{2^d}
\left[
1+(2^d-1)\cos(T\theta)
\right]
=
\frac{1+\Delta\cos(T\theta)}
{\Delta+1}.
\]
At the parity-matched time
\[
T=\frac{\pi\Delta}{2}+O(1),
\]
the nontrivial Fourier modes again align, giving the asymptotic return
predicted by Theorem~7.

The complete graph also illustrates why the distinction between
transport and return is useful. Although the general theorem detects a
long-time revival near \(T=\pi\Delta/2\), the Grover-coined walk on the
complete graph has additional symmetry and can exhibit much faster
short-time return behavior. Thus the time furnished by Theorem~7 need
not be the earliest time at which a large hitting or return probability
occurs.

\section{Conclusion}

We have studied  one-shot and concurrent hitting for the discrete-time
Grover-coined quantum walk on cubelike graphs
\[
G=\operatorname{Cay}(\mathbb Z_2^d,\Omega),
\]
with degree \(\Delta=|\Omega|\). The main result identifies the algebraically
distinguished vertex
\[
\sigma=\bigoplus_{\omega\in\Omega}\omega
\]
as a natural target for the walk.

For a parity-matched observation time \(T\) satisfying
\[
T\equiv\Delta\pmod 2,
\qquad
\left|T-\frac{\pi\Delta}{2}\right|\le 1,
\]
we proved, for families with \(\Delta\to\infty\),
\[
p_T(\sigma)=1-O(\Delta^{-1/5}).
\]
Thus the probability of finding the walker at the canonical target tends
to one after \(O(\Delta)\) steps. The argument applies uniformly to both
even and odd degree and does not require assumptions on the Hamming
weights of the generators or additional symmetry of the graph.

We also considered the concurrently measured walk, in which the target
is tested after every step. A renewal decomposition together with the
triangle inequality and Cauchy--Schwarz inequality gives
\[
p_T(\sigma)
\le
T H_T^{\mathrm{Conc}}(\sigma).
\]
Combining this with the  result yields
\[
H_T^{\mathrm{Conc}}(\sigma)
\ge
\frac{2}{\pi\Delta}
\left(1-O(\Delta^{-1/5})\right),
\]
and hence
\[
H_T^{\mathrm{Conc}}(\sigma)=\Omega(\Delta^{-1})
\]
within \(T=\Theta(\Delta)\) steps.

The result includes the hypercube as a special case and applies directly
to less symmetric cubelike families such as augmented cubes. When
\(\sigma=0\), as occurs for the complete cubelike graph
\(K_{2^d}\) for \(d\ge2\), the same mechanism produces an asymptotically
perfect return rather than transfer to a distinct vertex. The appearance
of the XOR of the generating set is also closely related to the
continuous-time perfect-state-transfer phenomenon on cubelike graphs,
where the same algebraically distinguished displacement arises.

\subsection*{Future directions}

Several questions remain open. First, the error exponent \(1/5\) arises
from the particular good--bad Fourier-mode decomposition and the use of
only the second moment of \(M(a)\). Sharper information about the
distribution or higher moments of the character sums may improve the
convergence rate and, for particular graph families, may lead to
substantially stronger estimates.

Second, the concurrent result is obtained through the general inequality
\[
H_T^{\mathrm{Conc}}(\sigma)\ge \frac{p_T(\sigma)}{T},
\]
and therefore may not be tight. A direct analysis of the first-detection
amplitudes
\[
PU(QU)^{t-1}|\psi_0\rangle
\]
could determine the actual concurrent hitting probability and expected
hitting time for broader classes of cubelike graphs.

A further direction is to determine whether the phase-alignment
mechanism identified here extends beyond cubelike graphs. More generally,
one may ask which Cayley graphs or other highly structured graph families
possess a distinguished target for which their Fourier or representation-
theoretic phases align at a common time.

Finally, it would be interesting to investigate the corresponding
marked-vertex search problem. The explicit spectral description of the
unperturbed coined walk developed here may provide a starting point for
studying search on general cubelike graphs and for understanding the
relationship between the present hitting phenomenon and quantum-walk
search on the hypercube.
\section*{Acknowledgements and Declaration of Generative AI Use}

During the preparation of this manuscript, the authors used OpenAI's
ChatGPT, including GPT-5-series models, as an interactive research and
writing aid. The tool was used to assist with mathematical brainstorming,
exploration of possible generalizations, checking intermediate algebraic
and logical steps, improving notation and exposition, identifying possible
gaps or inconsistencies in draft proofs, suggesting relevant literature,
and assisting with the preparation and editing of \LaTeX{} text.

All mathematical statements, derivations, and proofs appearing in the
final manuscript were independently reviewed and verified by the authors.
The authors take full responsibility for the correctness and originality
of the results, the accuracy of the citations, and the content of the
manuscript. 

\bibliographystyle{plain}
\bibliography{cubelike_hitting_references}

@article{FarhiGutmann1998,
  author  = {Farhi, E. and Gutmann, S.},
  title   = {Quantum Computation and Decision Trees},
  journal = {Physical Review A},
  volume  = {58},
  pages   = {915--928},
  year    = {1998},
  doi     = {10.1103/PhysRevA.58.915}
}

@inproceedings{Aharonov2001,
  author    = {Aharonov, D. and Ambainis, A. and Kempe, J. and Vazirani, U.},
  title     = {Quantum Walks on Graphs},
  booktitle = {Proceedings of the 33rd Annual ACM Symposium on Theory of Computing},
  pages     = {50--59},
  year      = {2001},
  doi       = {10.1145/380752.380758}
}

@article{Kempe2003,
  author  = {Kempe, J.},
  title   = {Quantum Random Walks: An Introductory Overview},
  journal = {Contemporary Physics},
  volume  = {44},
  number  = {4},
  pages   = {307--327},
  year    = {2003},
  doi     = {10.1080/00107151031000110776}
}

@article{Kempe2005,
  author  = {Kempe, J.},
  title   = {Discrete Quantum Walks Hit Exponentially Faster},
  journal = {Probability Theory and Related Fields},
  volume  = {133},
  number  = {2},
  pages   = {215--235},
  year    = {2005},
  doi     = {10.1007/s00440-004-0423-2}
}

@article{Shenvi2003,
  author  = {Shenvi, N. and Kempe, J. and Whaley, K. B.},
  title   = {Quantum Random-Walk Search Algorithm},
  journal = {Physical Review A},
  volume  = {67},
  pages   = {052307},
  year    = {2003},
  doi     = {10.1103/PhysRevA.67.052307}
}

@article{Ambainis2007,
  author  = {Ambainis, A.},
  title   = {Quantum Walk Algorithm for Element Distinctness},
  journal = {SIAM Journal on Computing},
  volume  = {37},
  number  = {1},
  pages   = {210--239},
  year    = {2007},
  doi     = {10.1137/S0097539705447311}
}

@inproceedings{MooreRussell2002,
  author    = {Moore, C. and Russell, A.},
  title     = {Quantum Walks on the Hypercube},
  booktitle = {Randomization and Approximation Techniques in Computer Science},
  series    = {Lecture Notes in Computer Science},
  volume    = {2483},
  pages     = {164--178},
  publisher = {Springer},
  year      = {2002},
  doi       = {10.1007/3-540-45726-7_14}
}

@article{KroviBrun2006,
  author  = {Krovi, H. and Brun, T. A.},
  title   = {Hitting Time for Quantum Walks on the Hypercube},
  journal = {Physical Review A},
  volume  = {73},
  pages   = {032341},
  year    = {2006},
  doi     = {10.1103/PhysRevA.73.032341}
}

@article{Childs2009,
  author  = {Childs, A. M.},
  title   = {Universal Computation by Quantum Walk},
  journal = {Physical Review Letters},
  volume  = {102},
  pages   = {180501},
  year    = {2009},
  doi     = {10.1103/PhysRevLett.102.180501}
}

@article{BernasconiGodsilSeverini2008,
  author  = {Bernasconi, A. and Godsil, C. and Severini, S.},
  title   = {Quantum Networks on Cubelike Graphs},
  journal = {Physical Review A},
  volume  = {78},
  pages   = {052320},
  year    = {2008},
  doi     = {10.1103/PhysRevA.78.052320}
}

@article{Godsil2012,
  author  = {Godsil, C.},
  title   = {State Transfer on Graphs},
  journal = {Discrete Mathematics},
  volume  = {312},
  number  = {1},
  pages   = {129--147},
  year    = {2012},
  doi     = {10.1016/j.disc.2011.06.032}
}

@article{Friedman2017,
  author  = {Friedman, H. and Kessler, D. A. and Barkai, E.},
  title   = {Quantum Walks: The First Detected Passage Time Problem},
  journal = {Physical Review E},
  volume  = {95},
  pages   = {032141},
  year    = {2017},
  doi     = {10.1103/PhysRevE.95.032141}
}

@article{ChoudumSunitha2002,
  author  = {Choudum, S. A. and Sunitha, V.},
  title   = {Augmented Cubes},
  journal = {Networks},
  volume  = {40},
  number  = {2},
  pages   = {71--84},
  year    = {2002},
  doi     = {10.1002/net.10033}
}

@article{MulherkarRajdeepakSunitha2022,
  author  = {Mulherkar, J. and Rajdeepak, R. and Sunitha, V.},
  title   = {Implementation of Quantum Hitting Times of Cubelike Graphs on {IBM}'s {Qiskit} Platform},
  journal = {International Journal of Quantum Information},
  volume  = {20},
  number  = {07},
  pages   = {2250020},
  year    = {2022},
  doi     = {10.1142/S0219749922500204}
}

@article{VenegasAndraca2012,
  author  = {Venegas-Andraca, S. E.},
  title   = {Quantum Walks: A Comprehensive Review},
  journal = {Quantum Information Processing},
  volume  = {11},
  number  = {5},
  pages   = {1015--1106},
  year    = {2012},
  doi     = {10.1007/s11128-012-0432-5}
}

@book{Portugal2013,
  author    = {Portugal, R.},
  title     = {Quantum Walks and Search Algorithms},
  publisher = {Springer},
  address   = {New York},
  year      = {2013},
  doi       = {10.1007/978-1-4614-6336-8}
}

\end{document}